\newtheorem{example}{Example}
\newcommand{\thetav}{\mathbf{\theta}}
\begin{document}

\title{MOVER confidence intervals for a difference or ratio effect parameter under stratified sampling}

\author{Yongqiang Tang}

\address{Department of Biometrics, Grifols, Durham, NC, USA}

\corres{Yongqiang Tang, Boston, USA \newline
E-mail: yongqiang\_tang@yahoo.com}

\abstract[Summary]{
Stratification is commonly employed in clinical trials to reduce the chance covariate imbalances and increase  the precision of the treatment effect estimate.
We propose a general framework for constructing the confidence interval (CI) for  a difference or ratio effect parameter under stratified sampling 
by  the method of variance estimates recovery (MOVER).  
We consider the additive variance and additive CI approaches for the difference, in which 
 either the CI for the weighted difference, or the CI for the weighted effect in each group, or the variance for the weighted difference is 
calculated as the weighted sum of the corresponding stratum-specific statistics.
The CI for the ratio is derived by the Fieller and log-ratio methods. 
The weights can be random quantities under the assumption of a constant effect across strata, but this assumption is not needed for fixed weights.
These methods can be easily applied  to different endpoints in that they
require only  the point estimate, CI, and  variance estimate for the measure of interest  in each group across strata.
 The methods are illustrated with two real  examples.
In one example, we derive the MOVER CIs for the risk difference and risk ratio  for binary outcomes. In the other example, we  compare the restricted mean survival time and milestone survival in stratified analysis of time-to-event outcomes.
 Simulations show that the proposed MOVER CIs generally outperform the standard large sample CIs, and that the additive CI approach performs better than the additive variance approach.  Sample SAS code is provided in the Supplementary Material.
}
\keywords{
Additive confidence interval approach; Additive variance approach; Delta method; Fieller method;  Mantel-Haensze estimator; Minimum risk weight; Non-constant effect; Restricted mean survival time}

\maketitle

\section{Introduction}
In clinical trials, randomization is often performed by stratifying on a few  prognostic factors.
 For example, in the belimumab trial for systemic lupus erythematosus \cite{stohl:2017}, subjects were stratified by  a screening SELENA–SLEDAI score ($\leq$ 9 versus $\geq$10), complement level (those with versus those without low C3 and/or C4), and race (black versus non‐black).  
Literature reviews have been conducted to assess the usefulness of stratified randomization \cite{kernan:1999}, and the reporting and analyses of stratified clinical trials \cite{kernan:2012}.
Stratification serves two purposes  \citep{chmp:2015}.  First, it prevents imbalances  between treatment groups in  stratification factors.
When there is a chance imbalance in important prognostic factors under simple randomization,   the response could be different between treatment groups even if the two treatments have the same effect \cite{chu:2012}.
The treatment effect estimate without adjustment for such imbalance tends to be biased. 
  Second, stratification improves the precision of the treatment effect estimate \cite{grizzle:1982,mchugh:1983, miratrix:2013, tang:2018, tang:2020b}.  Although the post-stratification
is as efficient as the pre-stratification in large samples  \citep{peto:1977}, pre-stratification can be much more efficient than the post-stratification in small and moderate samples \citep{grizzle:1982,mchugh:1983, miratrix:2013, tang:2018, tang:2020b},
and permit meaningful subgroup analyses \citep{chmp:2015} since randomization is applied to each stratum.

The   method of variance estimates recovery (MOVER), originally developed by  Howe \cite{howe:1974}  for constructing the confidence interval (CI) for the mean of the sum of two independent variables, 
 becomes popular since it was generalized to the difference and ratio effect parameters  \citep{newcombe:1998,donner:2012, newcombe:2016}. 
Sometimes, the Wald approach might be the only simple direct  interval estimation method  in complex situations such as
the comparison of the restricted mean survival time (RMST) on the basis of the nonparametric Kaplan-Meier (KM) technique. However, it is straightforward to use the MOVER technique,
in which the CI for RMST is derived for each group, and combined  into the CI for the difference or ratio of RMST between two groups \citep{tang:2021}.
One  selects a single sample CI  (e.g.  score type CIs) with good properties, which will be inherited by the MOVER CI for the comparison of two groups. 
The MOVER method generally performs well compared to other asymptotic approaches in finite samples \citep{newcombe:1998, lee:2004, donner:2012, tang:2021}.

The purpose of the paper  is to propose a general framework for constructing MOVER CIs for a  difference and ratio effect parameter under stratified sampling.
We derive the  additive variance (AV) and additive CI (AC)  approaches for the difference parameter.
They require only  the point estimate, CI, and  variance estimate for the parameter of interest  in each group across strata.
The  AV approach relies on the delta method, in which the variance for the weighted difference is calculated as the sum of the MOVER variance for the difference in each stratum. Although 
the variability in the weight is ignored, the AV approach is asymptotically valid for random weights such as the inverse variance (INV) weight when the effect is constant across strata.

In the AC approach, the CI of the weighted sum statistic is calculated as the sum of the  stratum-specific $(1-\gamma)$-level CI for an appropriate $\gamma$ by ignoring the variability in the weights. 
There are two variations (labeled as ``AC'' and ``AC2''). 
In AC, the CI is constructed for the summary effect in each group, and then combined into the MOVER CI for the difference between two groups.
In AC2, we derive the MOVER or other CIs for the stratum-specific difference, and combine them into the CI for the weighted difference.

The CI for the ratio  is obtained by the Fieller or  log-ratio method. A concern with the Fieller CI is that it may be disjoint, and hence difficult to interpret \citep{sherman:2011}.
We show the Fieller CI  is non-disjoint and uniquely determined if the parameter of interest
 takes only positive values, and this applies to  binary proportions, Poisson rates or mean survival.
The proposed MOVER CIs are asymptotically valid for random weights  if the difference or ratio is constant across strata, or for non-constant effects when the weights are fixed.

The rest of the paper is organized as follows. 
We introduce the general MOVER methods under stratified sampling in Section 2.
 Real data analyses and simulations are given respectively in sections 3 and 4.

\section{MOVER intervals  under stratified sampling}

\subsection{Review of MOVER CIs in unstratified samples}
We are interested in the risk difference (RD)   between two independent groups ($g=0,1$).
The $100(1-\alpha)\%$ CI for $\tau_g$ is $(l_g,u_g)$. 
The MOVER Cl for $d=\tau_1-\tau_0$ is constructed as
\begin{equation}\label{ciunstra}
\left[\hat{d}-\sqrt{(l_1 -\hat{\tau}_1)^2+(u_0 -\hat{\tau}_0)^2}, \hat{d}+\sqrt{(u_1 -\hat{\tau}_1)^2+(l_0 -\hat{\tau}_0)^2}\right],
\end{equation}
where the symbol with $\,\hat{}\,$ denotes the point estimate. Let   $\sigma_g^2$ be the variance of $\hat\tau_g$.
The underlying idea  \citep{donner:2012} is that the variance $\sigma_g^2$ can be recovered as $( l_g - \hat{\tau}_{g})^2/ z_{\alpha/2}^2$ or
 $(u_g - \hat{\tau}_{g})^2/ z_{\alpha/2}^2$, where $z_p$ is the $(1-p)$-th percentile of the normal distribution.
One shall choose a one-sample CI with good properties. Examples include the score type CI for binary \citep{newcombe:1998}, Poisson \citep{li:2014} and survival \citep{tang:2021} outcomes, and 
the exact CI for variance components related to the multivariate
normal distribution \citep{lee:2004}.
The MOVER CI in Equation \eqref{ciunstra} is asymptotically equivalent to the Wald CI, but the MOVER CI generally has a better performance in finite samples.

There are possible different ways to construct the MOVER CI. Two simple alternatives are
$\left[\hat{d}-\sqrt{(u_1 -\hat{\tau}_1)^2+(l_0 -\hat{\tau}_0)^2}, \hat{d}+\sqrt{(l_1 -\hat{\tau}_1)^2+(u_0 -\hat{\tau}_0)^2}\right]$, and 
$[\hat{d}-\sqrt{\bar{\sigma}^2}, \hat{d}+\sqrt{\bar{\sigma}^2}]$, where $\bar{\sigma}^2=\sum_{g=0}^1[(l_g -\hat{\tau}_g)^2+(u_g -\hat{\tau}_g)^2]/2$.
We will use formula \eqref{ciunstra} since it reduces to the one-sample CI $[l_1,u_1]$  when  $\hat\tau_0=l_0=u_0=0$, and to $[l_1-\tau_0,u_1-\tau_0]$
as $n_0\rightarrow \infty$. 
In unreported simulations, formula \eqref{ciunstra} outperforms the two alternatives for binary outcomes.

\subsection{Weighted difference}\label{dmover}
As shown in Appendix \ref{biaseff}, the unstratified RD estimate  is generally biased under stratified sampling. 
We describe a general framework for constructing the CI for a difference or ratio effect parameter in stratified studies. Subjects in different strata and groups are assumed to be independent. 
 Let $\thetav_s$ denote the set of model parameters for stratum $s=1,\ldots, S$, and $\thetav=\{\thetav_1,\ldots,\thetav_S\}$.
Suppose we want to construct the CI  for the weighted difference
\begin{equation}\label{dmeasure}
\tau=\sum_{s=1}^S w_s [\delta_{s1}-\phi \delta_{s0}]=\tau_1-\phi\tau_0,
\end{equation}
 where $\delta_{sg}$ is a function of $\thetav_s$, and $\tau_g=\sum_s w_s\delta_{sg}$. We set $\phi=1$  if the difference is of interest. 
In general, $\phi$ denotes the ratio $\tau_1/\tau_0$.
In the Fieller CI approach for the ratio, the CI for $\tau$ is needed  given  $\phi$, and  the details will be given in Section \ref{rmover}.
We assume $\sum_{s=1}^S w_s=1$ although the constraint is not necessary in calculating the ratio  $\phi$.

Let $n_{sg}$ be the sample size in group $g$ stratum $s$, 
 $\hat\sigma_{sg}^2$ the variance for $\hat{\delta}_{sh}$, $(l_{sg},u_{sg})$  the CI for $\delta_{ig}$,  
and $\hat{w}_s=w_s(\hat\thetav)$  the estimated weight. The total sample size is $n=\sum_{g=0}^1\sum_{s=1}^S n_{sg}$.

We illustrate the problem by a bioassay study \cite{gart:1985} in evaluating the carcinogenic effect on four sex-strain groups of mice. 
The number of responders $x_{sg}$ and  mice  $n_{sg}$  in the four strata are presented below 
\begin{center}
\begin{tabular}{l@{\extracolsep{4pt}}c@{\extracolsep{4pt}}c@{\extracolsep{4pt}}c@{\extracolsep{4pt}}c@{\extracolsep{3pt}}c}\hline
 stratum & 1 & 2 & 3 & 4\\\hline
control $x_{s0}/n_{s0}$  & {5}/{79} & {3}/{87} & {10}/{90} & {3}/{82}  \\
treated $x_{s1}/n_{s1}$ & {4}/{16} & {2}/{16} & {4}/{18} & {1}/{15} \\
\hline
\end{tabular}
\end{center}
\vspace{0.05in}

Let $p_{sg}$ be the proportion in group $g$ strata $s$. 
Then  $\thetav_s=\{p_{s0},p_{s1}\}$ and $\delta_{sg}=p_{sg}$. 
The weighted RD estimator is given by 
\begin{equation}\label{rdestbin}
 \tau =\sum_{s=1}^S w_s [\delta_{s1}- \delta_{s0}]= \sum_{s=1}^S w_s [p_{s1}- p_{s0}]
\end{equation}
with the MH, INV and MR weights. The MH weight $w_s^{(mh)}\propto n_{s1}n_{s0}/n_s$ is fixed while the INV and MR weights are random quantities depending on $\hat{p}_{sg}$'s.
The MOVER CIs are constructed on the basis of the single sample estimates
 $\hat{p}_{sg}=x_{sg}/n_{sg}$, $\hat\sigma_{sg}^2=\hat{p}_{sg}(1-\hat{p}_{sg})/n_{sg}$ and the Wilson score interval\citep{newcombe:1998} for $p_{sg}$.

The AV, AC and AC2  CIs for $\tau$ rely on two lemmas and one corollary established in the remaining of this section.  Their proofs are given in the appendix.
The weight  $\hat{w}_s=w_s(\hat\thetav)$ shall not depend on
$n$ after normalization by the total weight. Otherwise, the Taylor series approximation underlying the delta method for Lemma   \ref{varsigle} may fail.
For this reason, Lemma   \ref{varsigle} is not suitable for the MR weight, and this will be illustrated in  Section \ref{secsim}.

\begin{lemma}\label{varsigle}
Assume both functions $\hat{w}_s=w_s(\hat\thetav)$ and  $\hat\delta_{sg} =\delta_{sg}(\hat\thetav_s)$ are  continuous and differentiable at  $\thetav$. 
Suppose $\sqrt{n}(\hat{\thetav}-\thetav)$ has a limiting  normal distribution.\\ 
a.) The asymptotic  variance of $\hat\tau_g^* =\sum_{s=1}^S\hat{ w}_s \,[\hat{\delta}_{sg}- \delta_{sg}] $ is given by
$\mbox{var}(\hat\tau_g^*)=\sum_{s=1}^S w_s^2 \,\mbox{var}(\hat{\delta}_{sg}).$
It is also the variance for $\hat\tau_g=\sum_{s=1}^S \hat{w}_s \hat{\delta}_{sg}$ when $\hat{w}_s$'s are fixed, or when $\delta_{1g}=\ldots=\delta_{Sg}$.\\
b.) 
If $\hat{w}_s$'s are fixed, or if $\delta_{s1}=\phi \delta_{s0}$ in all strata for a known $\phi$,
the asymptotic  variance of $\hat\tau = \sum_{s=1}^S \hat{w}_s \,[\hat{\delta}_{s1}- \phi  \hat{\delta}_{s0} ]=\hat\tau_1-\hat\tau_0$ is
$$\text{var}(\hat\tau)=\sum_{s=1}^S w_s^2 \left[\text{var}(\hat{\delta}_{s1})+\phi^2\text{var}(\hat{\delta}_{s0})\right] =\text{var}(\hat\tau_1^*)+\phi^2 \text{var}(\hat\tau_0^*).$$
c.) Under the assumption in (b),   the asymptotic variance of  $\hat{\psi}=\log[ \sum_{s=1}^S \hat{w}_s \,\hat{\delta}_{s1}] - \log[\sum_{s=1}^S \hat{w}_s \,\hat{\delta}_{s0}]$ is given by
$$\text{var}(\hat{\psi}) =  \frac{\text{var}(\hat\tau_1^*)}{\tau_1^{2}} + \frac{ \text{var}(\hat\tau_0^*)}{\tau_0^{2}}. $$
\end{lemma}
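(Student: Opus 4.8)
Every assertion is a first-order (delta-method) statement, so the plan is to linearize each estimator around the true $\thetav$, discard every product of two $O_p(n^{-1/2})$ terms, and then read off the variances from the independence of the $S$ strata and of the two groups. The preliminary bookkeeping is the same throughout: since $\sqrt{n}(\hat\thetav-\thetav)$ is asymptotically normal, since $w_s(\cdot)$ and $\delta_{sg}(\cdot)$ are differentiable at $\thetav$, and since the normalized weights $\hat w_s$ do not depend on $n$ (the condition imposed just before the lemma, which keeps $\nabla w_s(\thetav)=O(1)$), the delta method gives $\hat w_s-w_s=O_p(n^{-1/2})$ and $\hat\delta_{sg}-\delta_{sg}=O_p(n^{-1/2})$, so $(\hat w_s-w_s)(\hat\delta_{sg}-\delta_{sg})=O_p(n^{-1})$ is negligible at first order. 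I also use the normalization $\sum_s\hat w_s=\sum_s w_s=1$ and, for part (c), that $\tau_0,\tau_1>0$.

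\emph{Part (a).} Writing $\hat w_s[\hat\delta_{sg}-\delta_{sg}]=w_s[\hat\delta_{sg}-\delta_{sg}]+(\hat w_s-w_s)(\hat\delta_{sg}-\delta_{sg})$ gives $\hat\tau_g^{*}=\sum_s w_s[\hat\delta_{sg}-\delta_{sg}]+o_p(n^{-1/2})$, so by independence of the strata $\mathrm{var}(\hat\tau_g^{*})=\sum_s w_s^2\,\mathrm{var}(\hat\delta_{sg})$. For $\hat\tau_g=\sum_s\hat w_s\hat\delta_{sg}$ I would expand $\hat\tau_g-\tau_g=\sum_s(\hat w_s-w_s)\delta_{sg}+\sum_s w_s(\hat\delta_{sg}-\delta_{sg})+o_p(n^{-1/2})$ and observe that the leading weight-perturbation term $\sum_s(\hat w_s-w_s)\delta_{sg}$ is identically zero when the $\hat w_s$ are fixed, and equals $\delta_g\sum_s(\hat w_s-w_s)=\delta_g(1-1)=0$ when $\delta_{1g}=\cdots=\delta_{Sg}=:\delta_g$; in either case $\hat\tau_g-\tau_g$ has the same first-order expansion, hence the same asymptotic variance, as $\hat\tau_g^{*}$.

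\emph{Part (b).} This is part (a) applied to $\eta_s:=\delta_{s1}-\phi\delta_{s0}$ and $\hat\eta_s:=\hat\delta_{s1}-\phi\hat\delta_{s0}$: one has $\hat\tau-\tau=\sum_s(\hat w_s-w_s)\eta_s+\sum_s w_s(\hat\eta_s-\eta_s)+o_p(n^{-1/2})$, where the first sum vanishes for fixed weights and also when $\eta_s\equiv 0$, i.e. $\delta_{s1}=\phi\delta_{s0}$ in every stratum. Expanding $\hat\eta_s-\eta_s=(\hat\delta_{s1}-\delta_{s1})-\phi(\hat\delta_{s0}-\delta_{s0})$ and using independence of the two groups and of the strata yields $\mathrm{var}(\hat\tau)=\sum_s w_s^2[\mathrm{var}(\hat\delta_{s1})+\phi^2\mathrm{var}(\hat\delta_{s0})]=\mathrm{var}(\hat\tau_1^{*})+\phi^2\,\mathrm{var}(\hat\tau_0^{*})$.

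\emph{Part (c), and the main obstacle.} Applying the delta method to the logarithm gives $\hat\psi-\psi=(\hat\tau_1-\tau_1)/\tau_1-(\hat\tau_0-\tau_0)/\tau_0+o_p(n^{-1/2})$. The delicate point — the step I expect to be the main obstacle — is that $\hat\tau_1$ and $\hat\tau_0$ are \emph{not} independent, since both carry the estimated weights $\hat w_s=w_s(\hat\thetav)$, so one may not simply add their variances. The way around this is to track the shared term: set $A:=\sum_s(\hat w_s-w_s)\delta_{s1}$ and $B_g:=\sum_s w_s(\hat\delta_{sg}-\delta_{sg})$, so by part (a) $\hat\tau_1-\tau_1=A+B_1+o_p(n^{-1/2})$, and, since $\delta_{s0}=\delta_{s1}/\phi$ and hence $\tau_1=\phi\tau_0$, also $\hat\tau_0-\tau_0=A/\phi+B_0+o_p(n^{-1/2})$. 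Substituting, the common term $A$ cancels and $\hat\psi-\psi=(B_1-\phi B_0)/\tau_1+o_p(n^{-1/2})$; since $B_1$ and $B_0$ involve only group-$1$ and group-$0$ data they are independent, giving $\mathrm{var}(\hat\psi)=[\mathrm{var}(B_1)+\phi^2\mathrm{var}(B_0)]/\tau_1^2=\mathrm{var}(\hat\tau_1^{*})/\tau_1^2+\phi^2\,\mathrm{var}(\hat\tau_0^{*})/\tau_1^2$, and $\phi^2/\tau_1^2=1/\tau_0^2$ finishes it. If instead the weights are fixed, then $A=0$ from the outset and $\hat\psi-\psi=B_1/\tau_1-B_0/\tau_0+o_p(n^{-1/2})$ yields the same formula directly.
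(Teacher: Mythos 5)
Your proof is correct and follows essentially the same route as the paper: a delta-method linearization in which the weight-perturbation term is shown to be either higher order or to vanish (using $\sum_s \hat w_s=1$ for the constant-$\delta$ case), followed by independence across strata and groups to add variances. In part (c) your explicit cancellation of the shared term $A=\sum_s(\hat w_s-w_s)\delta_{s1}$ is the same mechanism the paper expresses by noting that its quantity $c=\sum_s \hat w_s\delta_{s1}/\tau_1-\sum_s \hat w_s\delta_{s0}/\tau_0$ equals zero under the stated assumptions.
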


While Lemma \ref{varsigle}b seems obvious when $w_i$'s are fixed, its merit lies in its validity for a common difference or ratio parameter across strata when the weights 
 are random quantities. Lemma \ref{varsigle}c gives the variance for the log ratio of two weighted means. 

In the AV approach,  the variance of $\hat\tau = \sum_{s=1}^S \hat{w}_s \,[\hat{\delta}_{s1}- \phi \hat{\delta}_{s0} ]$ is recovered
 in an additive form as $\hat\sigma_L^2=\sum_{s=1}^S \hat{w}_s^2 [(l_{s1}- \hat{\delta}_{s1})^2 +\phi^2 (u_{s0}- \hat{\delta}_{s0})^2]$ or 
$\hat\sigma_U^2=\sum_{s=1}^S \hat{w}_s^2 [(u_{s1}- \hat{\delta}_{s1})^2 +\phi^2 (l_{s0}- \hat{\delta}_{s0})^2]$ according to Lemma \ref{varsigle}b. 
The AV CI for $\tau$ is
$$\left[\hat\tau -\sqrt{\hat\sigma_L^2}, \hat\tau +\sqrt{\hat\sigma_U^2}\right].$$

\begin{lemma}\label{eachci}
Suppose the $1-\gamma$ level CI for $\delta_{sg}$ is $(l_{sg}^{(\gamma)}, u_{sg}^{(\gamma)})$, where  $\sigma_{sg}^2$ is the  variance of $\hat{\delta}_{sg}$, 
$\hat\sigma_{sg}^2$ is a consistent  variance estimate, and
$z_{\gamma/2} = \frac{\sqrt{\sum_s \hat{w}_s^2 \hat\sigma_{sg}^2}}{\sum_s \hat{w}_s \hat\sigma_{sg}} z_{\alpha/2}$. 
Let $(L_g,U_g)=(\sum_s \hat{w}_s l_{sg}^{(\gamma)}, \sum_s \hat{w}_s u_{sg}^{(\gamma)})$. 
If $\sigma_{sg}^2$ is consistently estimated by $(l_{sg}-\hat{\delta}_{sg})^2/z_{\gamma/2}^2$ and $(u_{sg}-\hat{\delta}_{sg})^2/z_{\gamma/2}^2$, then\\
a)  $(L_g-\hat\tau_g)^2 =  z_{\alpha/2}^2\sum_s w_s^2 \sigma_{sg}^2+o_p(n^{-1})$ and  $ (U_g  -\hat\tau_g)^2=  z_{\alpha/2}^2\sum_s w_s^2 \sigma_{sg}^2+o_p(n^{-1})$\\
b) Both $[\log(L_g)- \log(\hat\tau_g)]^2$ and $[\log(U_g)- \log(\hat\tau_g) ]^2 $ are consistent estimators of $ z_{\alpha/2}^2 \sum_s {w}_s^2 \sigma_{sg}^2/\tau_g^{2}$.
\end{lemma}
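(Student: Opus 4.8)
The plan is to reduce both parts to the single additive identity $L_g-\hat\tau_g=\sum_{s=1}^S\hat w_s\,(l_{sg}^{(\gamma)}-\hat\delta_{sg})$ together with its analogue $U_g-\hat\tau_g=\sum_{s=1}^S\hat w_s\,(u_{sg}^{(\gamma)}-\hat\delta_{sg})$, and then to track orders of magnitude carefully. The guiding point is that every variance here is $O(n^{-1})$, so ``consistent estimation'' of such a variance means an absolute error of order $o_p(n^{-1})$, whereas $\hat w_s$ and $\hat\sigma_{sg}$ differ from $w_s$ and $\sigma_{sg}$ only by factors $1+o_p(1)$. I shall also use that each point estimate lies strictly inside its single-sample CI, that is, $l_{sg}^{(\gamma)}<\hat\delta_{sg}<u_{sg}^{(\gamma)}$.

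For part (a), I would first invoke the assumed variance-recovery property, $(l_{sg}^{(\gamma)}-\hat\delta_{sg})^2=z_{\gamma/2}^2\,\sigma_{sg}^2(1+o_p(1))$, which with the sign constraint gives $l_{sg}^{(\gamma)}-\hat\delta_{sg}=-z_{\gamma/2}\sigma_{sg}+o_p(n^{-1/2})$. Summing against the weights yields $L_g-\hat\tau_g=-z_{\gamma/2}\sum_s\hat w_s\sigma_{sg}+o_p(n^{-1/2})$, and squaring, with the cross term of order $O_p(n^{-1/2})\,o_p(n^{-1/2})=o_p(n^{-1})$, gives $(L_g-\hat\tau_g)^2=z_{\gamma/2}^2\bigl(\sum_s\hat w_s\sigma_{sg}\bigr)^2+o_p(n^{-1})$. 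The crux is then the substitution of the stated $z_{\gamma/2}$: it is calibrated exactly so that $z_{\gamma/2}^2\bigl(\sum_s\hat w_s\hat\sigma_{sg}\bigr)^2=z_{\alpha/2}^2\sum_s\hat w_s^2\hat\sigma_{sg}^2$, and replacing $\hat w_s$ by $w_s$ and $\hat\sigma_{sg}$ by $\sigma_{sg}$ throughout (each up to a factor $1+o_p(1)$) collapses the right-hand side, which is $O(n^{-1})$, to $z_{\alpha/2}^2\sum_s w_s^2\sigma_{sg}^2+o_p(n^{-1})$. The argument for $U_g$ is word for word the same after reversing the sign of $u_{sg}^{(\gamma)}-\hat\delta_{sg}$.

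For part (b), I would write $\log(L_g)-\log(\hat\tau_g)=\log\bigl(1+(L_g-\hat\tau_g)/\hat\tau_g\bigr)$, note that $r:=(L_g-\hat\tau_g)/\hat\tau_g=O_p(n^{-1/2})$ because $\hat\tau_g\to\tau_g$ in probability with $\tau_g>0$ (positivity of $\tau_g$ is precisely where the ``positive-valued parameter'' hypothesis is used), and apply $\log(1+r)=r+O(r^2)$ so that $\log(L_g)-\log(\hat\tau_g)=r+O_p(n^{-1})$. Squaring gives $[\log(L_g)-\log(\hat\tau_g)]^2=(L_g-\hat\tau_g)^2/\hat\tau_g^2+o_p(n^{-1})$; substituting part (a) in the numerator and $\hat\tau_g^2=\tau_g^2(1+o_p(1))$ in the denominator produces $z_{\alpha/2}^2\sum_s w_s^2\sigma_{sg}^2/\tau_g^2+o_p(n^{-1})$, with the identical conclusion for $U_g$. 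Since the target variance $z_{\alpha/2}^2\sum_s w_s^2\sigma_{sg}^2/\tau_g^2$ is itself $O(n^{-1})$, this $o_p(n^{-1})$ accuracy is exactly the claimed consistency.

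I expect the only genuine difficulty to be bookkeeping discipline: one must keep straight which errors are $o_p(1)$ relative errors and which are $o_p(n^{-1})$ absolute errors, and must verify that every cross term produced by squaring or by the logarithmic expansion decays at rate $o_p(n^{-1})$ rather than merely $o_p(n^{-1/2})$. Once these orders are pinned down, the conclusion is driven entirely by the calibration built into $z_{\gamma/2}$ together with the assumed consistency of $\hat w_s$ and $\hat\sigma_{sg}^2$, and no probabilistic input beyond the stated hypotheses is needed.
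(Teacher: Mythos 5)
Your proposal is correct and follows essentially the same route as the paper: the additive identity $L_g-\hat\tau_g=\sum_s\hat w_s(l_{sg}^{(\gamma)}-\hat\delta_{sg})$, the variance-recovery hypothesis to replace each $l_{sg}^{(\gamma)}-\hat\delta_{sg}$ by $-z_{\gamma/2}\hat\sigma_{sg}$ up to negligible error, the calibration identity $z_{\gamma/2}^2\bigl(\sum_s\hat w_s\hat\sigma_{sg}\bigr)^2=z_{\alpha/2}^2\sum_s\hat w_s^2\hat\sigma_{sg}^2$, and a Taylor expansion of $\log(1+x)$ for part (b). Your version merely makes explicit the $o_p(n^{-1})$ bookkeeping that the paper leaves implicit in its chain of equalities.
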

By Lemma \ref{eachci}a and Lemma \ref{varsigle}b,  the $100(1-\alpha)\%$ MOVER CI for $\tau=\tau_1-\phi\tau_0$ is given by
\begin{equation}\label{acformula}
\left[\hat{\tau}-\sqrt{(L_1-\hat\tau_1)^2+\phi^2(U_0-\hat\tau_0)^2},\hat{\tau}+\sqrt{(U_1-\hat\tau_1)^2+\phi^2(L_0-\hat\tau_0)^2}\right].
\end{equation}
This CI is referred to as the AC CI since 
  $[L_g,U_g]$  is the $100(1-\alpha)\%$ CI for $\tau_g$, and  the weighted sum of  the stratum-specific $100(1-\gamma)\%$ CIs for  $\delta_{sg}$  when the weights are fixed or 
 there is no stratum effect ($\delta_{1g}=\ldots=\delta_{Sg}$).
However,  when there is a stratum effect, the CI in Equation \eqref{acformula} is still valid for random weights if the difference $\delta_{s1}-\phi\delta_{s0}$ is constant. 

Lemma \ref{eachci}a is a generalization of Yan and Su (YS \cite{yan:2010}) result for binomial proportions.  There are several issues in the YS method \cite{yan:2010}. First, 
 $(L_g,U_g)$ is treated as the $100(1-\alpha)\%$ CI for $\tau_g$, and this is incorrect for random weights $w_i$'s when there is stratum effect.  
The YS MOVER CI takes the form
$$ \left[\hat\tau -z_{\alpha/2} \sqrt{\hat\sigma_{yL}^2}, \hat\tau -z_{\alpha/2} \sqrt{\hat\sigma_{yU}^2}\right],$$
where $\lambda_g=\sum_{s=1}^S w_s^2/n_{gs}$, $\hat\sigma_{yL}^2 = \lambda_1L_1(1-L_1)+\lambda_0U_0(1-U_0)$, and $\hat\sigma_{yU}^2 = \lambda_1U_1(1-U_1)+\lambda_0L_0(1-L_0)$.
The YS CI tends to overcover (conservative)  if the response rates vary greatly  across strata, and this will be shown  in Section \ref{secsim}.
YS \cite{yan:2010} suggested optimal weights for each group. When the weights differ between two groups, the CI defined in Equation \eqref{acformula} may be invalid  for a constant difference when there is a stratum effect.

\begin{corollary}\label{corac2}
Let $\hat\sigma_s^2$ be a consistent variance estimate for $\hat\delta_{s1}-\phi\hat\delta_{s0}$,
and $(\mathcal{L}_{s}, \mathcal{U}_{s})$ the $1-\gamma$ level CI for $\delta_{s1}-\phi\delta_{s0}$, where
$z_{\gamma/2} = \frac{\sqrt{\sum_s \hat{w}_s^2 \hat\sigma_s^2 }}{\sum_s \hat{w}_s\hat\sigma_{s} } z_{\alpha/2}$. 
Then $[\mathcal{L},\mathcal{U}]=[\sum_s \hat{w}_s \mathcal{L}_{s}, \sum_s \hat{w}_s \mathcal{U}_{s}]$ is the $100(1-\alpha)\%$ CI of $\tau$ if the weights $\hat{w}_s$'s
are fixed, or if the difference $\delta_{s1}-\phi\delta_{s0}$ is constant across strata.
\end{corollary}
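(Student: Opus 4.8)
The plan is to reduce the corollary to Lemma~\ref{eachci}a together with Lemma~\ref{varsigle}, by treating the stratum-specific difference $\hat\Delta_s=\hat\delta_{s1}-\phi\hat\delta_{s0}$ as the single ``parameter'' attached to stratum $s$. Its true value is $\Delta_s=\delta_{s1}-\phi\delta_{s0}$, its variance is $\sigma_s^2=\mbox{var}(\hat\delta_{s1})+\phi^2\mbox{var}(\hat\delta_{s0})$ by the within-stratum independence of the two groups, $\hat\sigma_s^2$ is a consistent estimate of $\sigma_s^2$, and $(\mathcal L_s,\mathcal U_s)$ is a one-sample $1-\gamma$ CI for $\Delta_s$. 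With this identification, $\hat\tau=\sum_s\hat w_s\hat\Delta_s$ and $[\mathcal L,\mathcal U]=[\sum_s\hat w_s\mathcal L_s,\sum_s\hat w_s\mathcal U_s]$ have precisely the structure of $\hat\tau_g$ and $(L_g,U_g)$ in Lemma~\ref{eachci}, so the argument used there can be recycled once the regularity conditions are checked.

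First I would verify that $\thetav_s\mapsto\Delta_s=\delta_{s1}(\thetav_s)-\phi\delta_{s0}(\thetav_s)$ is continuous and differentiable at $\thetav$, which is immediate from the corresponding property of each $\delta_{sg}$, so that Lemma~\ref{varsigle}a applies with $\hat\Delta_s$ in the role of $\hat\delta_{sg}$. It then gives that the asymptotic variance of $\hat\tau^{*}=\sum_s\hat w_s[\hat\Delta_s-\Delta_s]$ equals $\sum_s w_s^2\sigma_s^2$, and that this is also $\mbox{var}(\hat\tau)$ when the $\hat w_s$ are fixed or when $\Delta_1=\cdots=\Delta_S$, that is, exactly under the hypotheses of the corollary. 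The one point worth spelling out is why the extra delta-method term coming from the variability of $\hat w_s$ drops: to first order it equals $\sum_s(\hat w_s-w_s)\Delta_s$, and when $\Delta_s\equiv\Delta$ this is $\Delta\sum_s(\hat w_s-w_s)=0$ since both $\sum_s\hat w_s$ and $\sum_s w_s$ equal $1$.

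Next I would run the proof of Lemma~\ref{eachci}a on $\hat\Delta_s$. The prescribed $z_{\gamma/2}=\frac{\sqrt{\sum_s\hat w_s^2\hat\sigma_s^2}}{\sum_s\hat w_s\hat\sigma_s}\,z_{\alpha/2}$ has the required form: after pulling the common $O(n^{-1/2})$ factor out of the $\hat\sigma_s$, the multiplier of $z_{\alpha/2}$ is a ratio that converges in probability to a fixed value in $(0,1]$, the bound $1$ being the elementary inequality $\big(\sum_s w_s\sigma_s\big)^2\ge\sum_s w_s^2\sigma_s^2$; hence the implied level $\gamma$ stabilises in $[\alpha,1)$, the one-sample construction behaves well, and the consistent recovery of $\sigma_s^2$ by $(\mathcal L_s-\hat\Delta_s)^2/z_{\gamma/2}^2$ and $(\mathcal U_s-\hat\Delta_s)^2/z_{\gamma/2}^2$ carries over. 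Lemma~\ref{eachci}a then yields $(\mathcal L-\hat\tau)^2=z_{\alpha/2}^2\sum_s w_s^2\sigma_s^2+o_p(n^{-1})$ and the same for $(\mathcal U-\hat\tau)^2$; since $\sum_s w_s^2\sigma_s^2=O(n^{-1})$ and $\mathcal L<\hat\tau<\mathcal U$ for large $n$, this gives $\mathcal L=\hat\tau-z_{\alpha/2}\sqrt{\mbox{var}(\hat\tau)}\,\{1+o_p(1)\}$ and $\mathcal U=\hat\tau+z_{\alpha/2}\sqrt{\mbox{var}(\hat\tau)}\,\{1+o_p(1)\}$ using the previous step. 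Because $(\hat\tau-\tau)/\sqrt{\mbox{var}(\hat\tau)}$ is asymptotically standard normal --- the delta method applied to the asymptotically normal $\hat\thetav$ --- Slutsky's theorem then gives $\Pr(\mathcal L\le\tau\le\mathcal U)\to1-\alpha$, which is the claim.

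I expect the second step to be the only real obstacle. Lemma~\ref{eachci}a by itself only certifies that $[\mathcal L,\mathcal U]$ recovers the ``centred'' variance $\sum_s w_s^2\sigma_s^2=\mbox{var}(\hat\tau^{*})$, and upgrading this to an honest $100(1-\alpha)\%$ statement about $\tau$ needs the identity $\mbox{var}(\hat\tau^{*})=\mbox{var}(\hat\tau)$. That identity fails in general for random weights under a stratum effect --- the discarded term $\sum_s(\hat w_s-w_s)\Delta_s$ is then genuinely present --- and it is precisely the hypothesis ``fixed $\hat w_s$, or constant $\Delta_s$ across strata'' that eliminates it. The remaining ingredients --- the independence-based split of $\sigma_s^2$, the stabilisation of the data-dependent level $\gamma$, and the $o_p$ bookkeeping --- are routine and parallel the proofs of Lemmas~\ref{varsigle} and~\ref{eachci}.
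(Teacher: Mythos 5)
Your proposal is correct and follows essentially the route the paper intends: the corollary is the analogue of Lemma~\ref{eachci}a applied to the per-stratum difference $\hat\delta_{s1}-\phi\hat\delta_{s0}$ in place of $\hat\delta_{sg}$, with Lemma~\ref{varsigle}b supplying the identity $\mbox{var}(\hat\tau)=\sum_s w_s^2\sigma_s^2$ under the ``fixed weights or constant difference'' hypothesis. Your explicit remarks on why the weight-variability term vanishes and on the stabilisation of the data-dependent level $\gamma$ are details the paper leaves implicit, but they do not change the argument.
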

The AC2 CI is obtained from Corollary \ref{corac2}. It combines the MOVER or other CIs for the  difference in each stratum into a single CI for $\tau$. 
In the application of Lemma \ref{eachci} and Corollary \ref{corac2}, 
we employ the delta variance for $\hat\delta_{sg}$ and $\hat\delta_{s1}-\phi\hat\delta_{s0}$. Alternatively, one may use the variance  [e.g. $\frac{( l_{sg} - \hat{\tau}_{sg})^2}{ z_{\alpha/2}^2}$, 
 $\frac{(u_{sg} - \hat{\tau}_{sg})^2}{z_{\alpha/2}^2}$, $\frac{(u_{sg}-l_{sg})^2}{4z_{\alpha/2}^2}$] recovered from the CI.

\subsection{MOVER CIs for ratio under stratified sampling}\label{rmover}

The CI for the ratio is obtained via the Fieller or log-ratio method. In the log-ratio method, we first derive the CI for $\log(\tau_1)-\log(\tau_0)$ and then exponentiate it.
The Fieller approach inverts the test $H_0: \tau_1 -\phi \tau_0=0$.  The Fieller CI contains all values of $\phi$ for which the null hypothesis  is not rejected, 
or equivalently the CI for $\tau_1-\phi\tau_0$ includes $0$. 

\subsubsection{AC approach}\label{rdmover}
The Fieller CI  contains all  values of $\phi$ for which the CI for $\tau_1- \phi \tau_0$ given in Equation \eqref{acformula} includes 0.
If the parameters $\tau_{g}$'s take only positive values, the Fieller  CI is uniquely determined and non-disjoint. It is a consequence of the following lemma
 \begin{lemma}\label{fieller}
 The function $Z(\phi) = \frac{\tau_1 -\phi \tau_0}{ \sqrt{V_1+ \phi^2 V_0}}$ ($\tau_1\geq 0$, $\tau_0\geq 0$, $V_0>0$ and $V_1>0$)
 decreases from  $\frac{\tau_1}{ \sqrt{V_1}}$ to $- \frac{\tau_0}{ \sqrt{V_0}}$
 as $\phi$ increases from $0$ to $\infty$ because $\frac{\partial Z(\phi)}{\partial \phi}<0$. There is at most one solution for $z(\phi)=z_c$, where $z_c=\pm 1$ or $\pm z_{\alpha/2}$.
 \end{lemma}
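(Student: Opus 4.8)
The plan is to establish the sign of $\partial Z/\partial\phi$ by a direct differentiation, then read off monotonicity, the two endpoint values, and injectivity. First I would write $Z(\phi)=(\tau_1-\phi\tau_0)(V_1+\phi^2V_0)^{-1/2}$ and apply the product rule, obtaining
\[
\frac{\partial Z}{\partial\phi}=\frac{-\tau_0\,(V_1+\phi^2V_0)-(\tau_1-\phi\tau_0)\,\phi V_0}{(V_1+\phi^2V_0)^{3/2}}.
\]
The key algebraic step is to expand the numerator: the two terms $+\phi^2\tau_0V_0$ and $-\phi^2\tau_0V_0$ cancel, leaving $-\tau_0V_1-\tau_1\phi V_0$. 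Since $V_0>0$, $V_1>0$, $\phi\ge 0$, and at least one of $\tau_0,\tau_1$ is positive (in the applications $\tau_0>0$), this numerator is strictly negative while the denominator is strictly positive, so $\partial Z/\partial\phi<0$ on $(0,\infty)$ and $Z$ is strictly decreasing there (and on $[0,\infty)$ by continuity).

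Next I would evaluate the endpoints. Plugging in $\phi=0$ gives $Z(0)=\tau_1/\sqrt{V_1}$ immediately. For $\phi\to\infty$ I would divide numerator and denominator by $\phi$, so that $Z(\phi)=(\tau_1/\phi-\tau_0)\big/\sqrt{V_1/\phi^2+V_0}\to -\tau_0/\sqrt{V_0}$. Together with strict monotonicity and continuity, this shows $Z$ is a decreasing homeomorphism from $[0,\infty)$ onto the half-open interval $(-\tau_0/\sqrt{V_0},\,\tau_1/\sqrt{V_1}]$.

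Finally, the ``at most one solution'' claim is just injectivity of a strictly decreasing function: for any constant $z_c$ (in particular $z_c=\pm1$ or $z_c=\pm z_{\alpha/2}$) the equation $Z(\phi)=z_c$ has exactly one root if $z_c$ lies in the range of $Z$ and none otherwise, which is precisely what forces the Fieller set $\{\phi:|Z(\phi)|\le z_{\alpha/2}\}$ to be a single interval rather than a union of two rays. I do not anticipate a genuine obstacle here; the only points that need a little care are the cancellation in the numerator of $\partial Z/\partial\phi$ and the degenerate case $\tau_0=0$, in which the lower limit $-\tau_0/\sqrt{V_0}=0$ is only approached asymptotically, so I would state strict monotonicity on the open half-line and absorb $\phi=0$ by continuity.
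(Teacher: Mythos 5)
Your proof is correct and follows exactly the route the paper intends: the paper simply asserts $\partial Z(\phi)/\partial\phi<0$ inside the lemma statement, and your differentiation, with the cancellation of the $\pm\phi^2\tau_0V_0$ terms leaving the numerator $-(\tau_0V_1+\tau_1\phi V_0)$, is the computation that backs up that assertion, with the endpoint limits and injectivity following as you describe. Your added caveat about the fully degenerate case $\tau_0=\tau_1=0$ (where $Z\equiv 0$ and strict monotonicity fails) is a minor refinement the paper glosses over but does not change the substance.
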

  
In MOVER, setting the lower and upper limits for  $\tau_1- \phi \tau_0$ to 0 yields the CI for the ratio,
\begin{equation}\label{fiellerac}
\phi_l =\frac{b-\sqrt{b^2-a_lc_l}}{a_l}, \phi_u = \frac{b+\sqrt{b^2-a_uc_u}}{a_u}
\end{equation}
where $b=\hat{\tau}_1\hat{\tau}_0$, $a_l =\hat{\tau}_0^2- (U_0-\hat{\tau}_0)^2$, $c_l=\hat{\tau}_1^2-(L_1-\hat{\tau}_1)^2$,
 $a_u = \hat{\tau}_0^2- (L_0-\hat{\tau}_0)^2$, and $c_u=\hat{\tau}_1^2-(U_1- \hat{\tau}_1)^2$. If $a_u=0$, then $\phi_u=\infty$. 
We have $c_l\geq 0$ and $a_u\geq 0$. The solution \eqref{fiellerac} always exists since $b^2- a_l c_l\geq 0$ and $b^2-a_uc_u\geq 0$ no matter whether $a_l\geq 0$ and $c_u\geq0$.

In the log-ratio method,  the $100(1-\alpha)\%$ MOVER CI for $\log(\tau_1)-\log(\tau_0)$ is obtained based on  Lemma \ref{eachci}b and Lemma \ref{varsigle}c, and then exponentiated
\begin{equation}\label{aclogformula}
 \left[\exp\left(\log(\frac{\hat\tau_1}{\hat\tau_0})-\sqrt{\hat\sigma_{lL}^2}\right),\exp\left(\log(\frac{\hat\tau_1}{\hat\tau_0})+\sqrt{\hat\sigma_{lU}^2 }\right)\right],
\end{equation}
where $\hat\sigma_{lL}^2=\log^2(L_1/\hat\tau_1)+\log^2(U_0/\hat\tau_0)$, and $\hat\sigma_{lU}^2=\log^2(U_1/\hat\tau_1)+\log^2(L_0/\hat\tau_0)$. The log-ratio CI is incomputable if $\hat\tau_0=0$ and $\hat\tau_1=0$.
Otherwise it is usually  similar to the Fieller CI.

We refer the Fieller and log-ratio CIs to respectively as ``AC'' and ``ACL''.
It is valid for random $w_i$'s with a constant ratio. Interestingly, the ratio does not have to be the same across strata for fixed (e.g. MH) weights.

\subsubsection{AC2 approach}\label{rdmover2}
 Setting the limits of the AC2 CI for $\tau_1-\phi\tau_0$ to 0 yields the Fieller CI for $\phi$.  The lower limit is $\phi_l=0$ if $\hat\tau_1=0$.
The upper limit is $\phi_u=\infty$ if $\hat\tau_0=0$. In general, there is no explicit analytic solution for the confidence limits.
They may be solved by the bisection method by using the AC confidence limit as the initial value.

\subsubsection{AV approach}\label{rdmover}
In the Fieller method, the CI has a similar form to Equation \eqref{fiellerac},
 \begin{equation}\label{fiellerav}
\phi_l = \frac{b-\sqrt{b^2-a_lc_l}}{a_l}, \phi_u = \frac{b+\sqrt{b^2-a_uc_u}}{a_u}
\end{equation}
 where $b=\hat{\tau}_1\hat{\tau}_0$,
 $a_l = \hat{\tau}_0^2-\sum_s \hat{w}_s^2 (u_{s0}-\hat{\delta}_{s0})^2 $,  $c_l=\hat{\theta}_1^2-\sum_s \hat{w}_s^2 (l_{s1}-\hat{\delta}_{s1})^2$,
$a_u =\hat{\tau}_0^2-\sum_s \hat{w}_s^2 (l_{s0}-\hat{\delta}_{s0})^2 $, and $c_u=\hat{\tau}_1^2-\sum_s \hat{w}_s^2 (u_{s1}-\hat{\delta}_{s1})^2$.

In the log-ratio method, the variance of $\log(\hat{\tau}_1/\hat\tau_0)$ is given by Lemma \ref{varsigle}c, and the CI is 
\begin{equation}\label{ratiolarge}
\left[\exp\left(\log(\frac{\hat{\tau}_1}{\hat\tau_0})-z_{\alpha/2}\sqrt{\frac{\sigma_1^2}{\hat\tau_1^2} + \frac{\sigma_0^2}{\hat\tau_0^2}}\right),  \exp\left(\log(\frac{\hat{\tau}_1}{\hat\tau_0})+z_{\alpha/2}\sqrt{\frac{\sigma_1^2}{\hat\tau_1^2} + \frac{\sigma_0^2}{\hat\tau_0^2}}\right)\,\right],
\end{equation}
where $\sigma_g^2$ is the asymptotic or MOVER variance for $\hat\tau_g^*$. In the MOVER approach, we replace $z_{\alpha/2}^2\sigma_g^2$ by
$\sum_s \hat{w}_s^2 (l_{sg}-\hat\theta_{sg})^2$ or $\sum_s \hat{w}_s^2 (u_{sg}-\hat\theta_{sg})^2$.

We denote the  Fieller and log-ratio CIs by ``AV'' and ``AVL'' respectively.

\section{Data examples}

\begin{table}[h]
\centering{
\begin{tabular}{c@{\extracolsep{5pt}}c@{\extracolsep{5pt}}c@{\extracolsep{5pt}}c@{\extracolsep{5pt}}c@{\extracolsep{5pt}}c@{\extracolsep{5pt}}c@{\extracolsep{5pt}}c@{\extracolsep{5pt}}c@{\extracolsep{5pt}}c@{\extracolsep{5pt}}c@{\extracolsep{5pt}}c@{\extracolsep{5pt}}}\\\hline
\multicolumn{4}{c}{Risk Difference} &     \multicolumn{3}{c}{Relative risk}  \\  \cline{1-4} \cline{5-6} 
 & MH$^{\S}$& INV & MR &   &   MH$^{\S}$ \\ \hline

 Est.&$ 0.106$&$ 0.084$&$ 0.096$& &$ 2.674$\\
\multicolumn{6}{l}{\bf Asymptotic CIs}\\
                                             DC$^{\dagger}$&$[ 0.012, 0.200]$&$-$&$-$&DC$^{\dagger}$&$[ 1.366, 5.234]$\\
                                             Wald&$[ 0.013, 0.198]$&$[-0.001, 0.169]$&$[ 0.005, 0.187]$&ASY&$[ 1.369, 5.222]$\\

\multicolumn{6}{l}{\bf MOVER CIs}\\
                                             AV&$[ 0.038, 0.225]$&$[ 0.025, 0.211]$&$[ 0.034, 0.217]$&AV&$[ 1.442, 5.033]$\\
                                             YS&$[ 0.027, 0.217]$&$[ 0.006, 0.200]$&$[ 0.015, 0.211]$&AVL&$[ 1.370, 5.688]$\\
                                             AC&$[ 0.029, 0.216]$&$[ 0.016, 0.190]$&$[ 0.022, 0.206]$&AC&$[ 1.373, 5.093]$\\
                                             AC2&$[ 0.029, 0.216]$&$[ 0.016, 0.190]$&$[ 0.022, 0.206]$&AC2&$[ 1.373, 5.093]$\\
                                              &$-$&$-$&$-$&ACL&$[ 1.368, 5.080]$\\

\hline
\end{tabular}\caption{Comparisons of CIs  in the analysis of a bioassay assessing the carcinogenic effect on four sex-strain groups of mice\newline
$^{\S}$ MH estimate of RD and RR \newline
$^{\dagger}$ Based on dually consistent variance estimate under both large strata and sparse data
}\label{tbin}}
\end{table}

\begin{example}\label{miceexam}
\normalfont
We revisit the bioassay data displayed  in Section \ref{dmover}. Appendix \ref{biaseff} shows that the estimated difference is generally biased in the unstratified analysis.
Stratified analysis can prevent bias, and  improve the precision of the treatment effect estimate \cite{grizzle:1982,mchugh:1983, miratrix:2013, tang:2018, tang:2020b}, and are hence  recommended.
The MOVER CIs are compared with the asymptotic CIs on the basis of  the  variance
valid only under  large strata  (labeled as ``ASY'' or ``Wald'', formulae given in the appendix) and the  dually consistent variance estimates (labeled as ``DC'') 
valid for both sparse data and  large strata \citep{greenland:1985, robins:1986, sato:1989}. The asymptotic CIs for
 RR  are obtained by exponentiating the CIs on the log scale. Sample SAS code is provided in the Supplementary Material.

The results are displayed in Table \ref{tbin}, and  all methods adjust for the stratification factor.
The following CIs require the constant effect assumption: 1)  All DC CIs for the MH estimators;
 2) All CIs for the INV weighted RD. 
These assumptions can not hold simultaneously.
For illustration purposes,  we assume the relevant assumptions hold for each method.
Throughout this paper, all CIs $(\mathcal{L},\mathcal{U})$ for the MR weighted RD are corrected as $(\mathcal{L}-c,\mathcal{U}+c)$ to penalize for ignoring variabilities in the weight  \citep{mehrotra:2000} even if the RD is constant, 
 where $c=\frac{3}{16} [\sum_{s=1}^S \frac{n_{s1}n_{s0}}{n_{s1}+n_{s0}}]^{-1}$.
No continuity correction is applied to other CIs.

On both RD and RR effect metrics, the AC and AC2 approaches yield very similar CIs, and the AV CI has a larger lower confidence limit than other CIs.
On the RR metric, the AVL CIs are wider than other MOVER CIs.

The Wald CI for the INV-weighted RD contains 0. All other CIs indicate there is a significant difference between two treatment groups as the lower limits for RD are above 0, and the lower limits for RR are above 1.
\end{example}

\begin{figure}[htbp]
\centering
\subfigure[male: $(n_{11},n_{10})=(90,81)$]{
\includegraphics[scale=0.4]{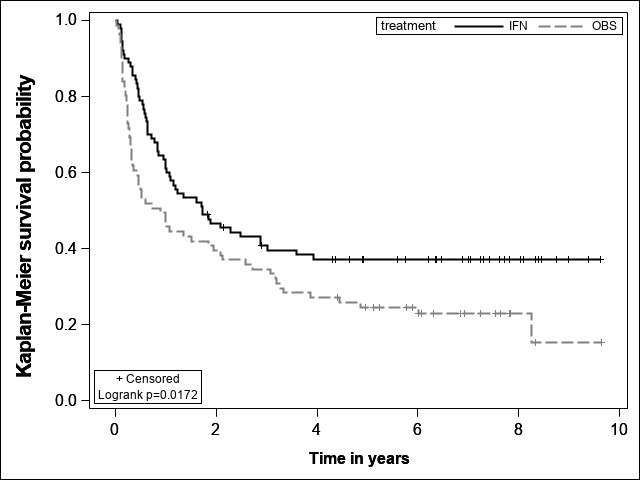}
}
\subfigure[female: $(n_{21},n_{20})=(54,59)$]{
\includegraphics[scale=0.4]{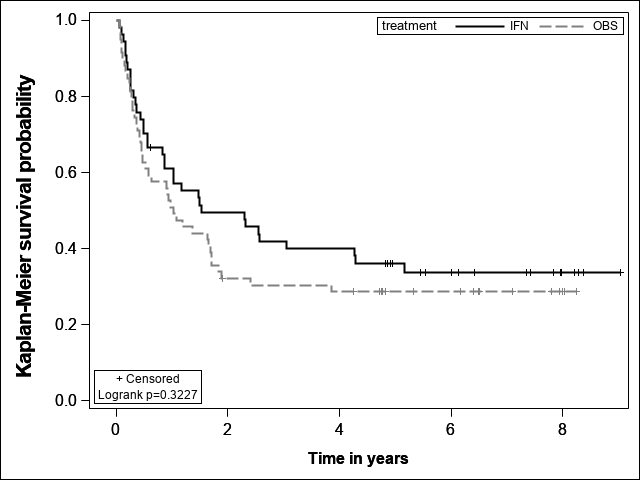}
}
\caption{KM plot of survival curves by sex in the ECOG 1684 trial}\label{kmplot}
\end{figure}

\begin{table}[h]
\centering{
\begin{tabular}{c@{\extracolsep{3pt}}c@{\extracolsep{3pt}}c@{\extracolsep{3pt}}c@{\extracolsep{3pt}}c@{\extracolsep{3pt}}c@{\extracolsep{3pt}}c@{\extracolsep{3pt}}c@{\extracolsep{3pt}}c@{\extracolsep{3pt}}c@{\extracolsep{3pt}}c@{\extracolsep{3pt}}c@{\extracolsep{3pt}}}\\\hline
 \multicolumn{2}{c}{One-sample CI} &     \multicolumn{2}{c}{Difference } &     \multicolumn{4}{c}{Ratio} \\  \cline{1-2} \cline{3-4} \cline{5-8}
  obs: control & IFN: treated &   AV &   AC$^{\dagger}$  & AV & AVL  &   AC$^{\dagger}$ & ACL \\ \hline
   \multicolumn{8}{l}{\bf 8-year KM survival probability: $(\hat\delta_{10},\hat\delta_{11},\hat\delta_{20},\hat\delta_{21})=( 0.228, 0.372, 0.286, 0.338)$}\\
                                        $[ 0.148, 0.332]$&$[ 0.278, 0.476]$&$[-.001, 0.211]$&$[-.001, 0.212]$&$[ 1.004, 2.023]$&$[ 0.982, 1.994]$&$[ 0.997, 2.058]$&$[ 0.996, 2.055]$\\
                                          $[ 0.186, 0.412]$&$[ 0.223, 0.473]$&$[ 0.000, 0.212]$&$[-.000, 0.214]$&$[ 1.007, 2.033]$&$[ 0.985, 2.003]$&$[ 1.000, 2.067]$&$[ 1.000, 2.065]$\\
                                          \\\multicolumn{8}{l}{\bf 8-year RMST: $(\hat\delta_{10},\hat\delta_{11},\hat\delta_{20},\hat\delta_{21})=( 2.692, 3.644, 2.874, 3.527)$}\\
                                       $[ 2.073, 3.439]$&$[ 2.969, 4.378]$&$[ 0.049, 1.584]$&$[ 0.048, 1.595]$&$[ 1.020, 1.654]$&$[ 1.011, 1.646]$&$[ 1.015, 1.667]$&$[ 1.015, 1.666]$\\
                                          $[ 2.124, 3.795]$&$[ 2.683, 4.471]$&$[ 0.053, 1.587]$&$[ 0.051, 1.598]$  &$[ 1.021, 1.657]$&$[ 1.012, 1.648]$&$[ 1.016, 1.669]$&$[ 1.016, 1.668]$\\
                        
\hline
\end{tabular}\caption{MOVER CIs in the analysis of  the ECOG 1684 trial stratified by sex. \newline
[1] For each effect measure, the first (second) row displays the one-sample CI for  male (female) and the CIs for the difference and ratio on basis of the MH (INV) weight. \newline
$^{\dagger}$ The AC2 CI is very similar to the AC CI for both difference and ratio, and not displayed due to limited space 
}\label{trmstkm}
}
\end{table}

\begin{example}\label{survexam}
\normalfont
We apply the proposed method to the RMST and milestone survival for time to event outcomes.
We analyze the  Eastern Cooperative Oncology Group (ECOG) 1684 trial \citep{kirkwood:1996}. 
Patients with American Joint Committee on Cancer stage IIB or III melanoma were randomized to receive Interferon alfa-2b (IFN) or to receive close observation (Obs).
A total of 287 patients were accrued between 1984 and 1990. and remained blinded under analysis until 1993. 
Patients in the IFN arm had significantly improved relapse-free
survival (RFS)  compared with Obs. 

The purpose of the analysis is to compare the 8-year KM survival  and RMST for RFS between two groups stratified by sex.
Figure \ref{kmplot} plots the KM curves for RFS by sex. 
Slightly earlier and larger separation in the KM curves was observed in male patients than in female subjects. 
 We first illustrate how to assess the treatment by stratum interaction by the MOVER technique.
The one-sample CI for both KM survival  and RMST is obtained by inverting the so-called  score-type or constrained variance test
in the sense that the variance of $\hat{\delta}_{gs}$ is obtained under the null hypothesis \citep{thomas:1975, barber:1999, tang:2021}.
Table \ref{trmstkm} presents the single-sample score type CIs for the KM survival \citep{barber:1999} and RMST   \citep{tang:2021} by sex.
The point estimate ($95\%$ MOVER CI based on the score limit) for the difference of RMST is $\hat{\delta}_{11}-\hat\delta_{10}=0.953$ ($[-0.054,1.912]$) for male, 
and $\hat{\delta}_{21}-\hat\delta_{20}=0.653$ ($[-0.596, 1.858]$) for female. Another application of the MOVER technique
yields the $95\%$ CI $[-1.271,1.874]$  for $({\delta}_{11}-\delta_{10})- ({\delta}_{21}-\delta_{20})$, which contains 0.
Although a larger separation in the KM curves is observed among males than among females,  the  difference does not significantly  differ between male and female. 
In general, a test of the treatment by stratum interaction requires a much larger sample size.
A similar technique may be employed on the log scale to assess whether  the ratio effect measure differs between male and female.

Table \ref{trmstkm}  displays the stratified MOVER CIs for the difference and ratio of the milestone survival and RMST. 
The MH weight is $(w_1,w_2)=(0.602,0.398)$. The INV weight is $(w_1,w_2)=(0.614,0.386)$ for the milestone survival, 
and $(w_1,w_2)=(0.613,0.387)$ for RMST. The INV and MH weights are close to each other, and yield very similar CIs. 
For RMST, the lower limit  is above 0 for the difference, and above 1 for the ratio, evidencing the superiority of IFN over control in prolonging the mean survival.
The difference for the milestone survival is marginally significant in the sense that the lower limit  is near 0 for the difference, and near 1 for the ratio. 
\end{example}

\section{Simulation}\label{secsim}
We conduct simulation to assess the proposed MOVER CIs for RD and RR in stratified analyses of binary proportions, and compare them with the asymptotic CIs.
In all cases, one million datasets are  simulated. 
A dataset will be regenerated until the following conditions hold:  1) there is at least one event in the study  ($\sum_{s=1}^S [\hat{p}_{s0}+\hat{p}_{s1}]>0$) on the RD metric,
2) there is at least one event in each treatment group ($\sum_{s=1}^S \hat{p}_{s0}>0$, $\sum_{s=1}^S \hat{p}_{s1}>0$) on the RR metric. 
Otherwise the DC CIs for the MH estimators \citep{tang:2020b} and some MOVER CIs may be incomputable.

There is more than $95\%$ chance that the empirical coverage probability (CP) lies within $0.04\%$ of the true value when the target level is $0.95$.
This greatly reduces the random error in the estimate, and enables the detection of subtle differences between methods. 
When the sample size is not large, it is usually difficult to control the CP exactly at the target level.
Following Tang \cite{tang:2020b}, we deem the CPs to be highly satisfactory if it lies within $[0.945, 0.955]$ at the nominal $95\%$ level.

On the RD scale, we use the estimate $\hat{p}_{sg}=0.5/n_{sg}$  if $x_{sg}=0$, and $\hat{p}_{sg}=1-0.5/n_{sg}$ if $x_{sg}=n_{sg}$ 
to calculate  the INV and MR weights \citep{greenland:1985}, but the difference is still computed as  $\hat{p}_{s1}-\hat{p}_{s0}=x_{s1}/n_{s1}- x_{s0}/n_{s0}$.

A CI is said to be asymptotically valid if the CP converges to $1-\alpha$ as $n\rightarrow \infty$. Simulation may show that a method is not asymptotically valid. 
In Example \ref{examheter} below,  the CPs 
are not close to $1-\alpha=95\%$ in the DC method for RD and RR under heterogeneity effect, and  in the YS method for RD even when $n=8000$.
In Example \ref{examyan} below, the type I errors are not close to $\alpha=5\%$  in the MR-weighted tests for RD at $n= 10000$.
These methods may not be asymptotically accurate.   All other MOVER methods are asymptotically valid according to the theory in Section \ref{dmover}.

\begin{example}\label{cpsim}\normalfont
We conduct a  simulation study to assess CIs for stratified comparisons of binary outcomes for a constant effect on the RD and RR scales. 
Suppose there are two strata.  The sample sizes are either balanced  $(n_{10},n_{11},n_{20},n_{21})=(24,24,16,16)$ or unbalanced $(n_{10},n_{11},n_{20},n_{21})=(12,36,8,24)$ in the two groups. The trial size is $n=80$.
The true response rates  in the control arm are set to  $(p_{10},p_{20})= (0.12k_1,0.12k_2)$ for $k_1,k_2=1,2,3,4,5$.
There are a total of $25$  combinations of $(p_{10},p_{20})$. The true effect is 0 or 0.3 on the RD scale, and 1 or 1.5 on the RR metric.
The results are displayed in Figures \ref{binplot}a, \ref{binplot}c and \ref{binplot}e. 

{\bf RD metric:}  The result for the INV weight  is not reported due to limited space.
The DC and Wald CIs  tend to undercover, and the coverage becomes much worse under  unbalanced sample sizes.
While both AC and AC2 maintain  CP above $94.5\%$ in most cases, the AC generally yields  slightly larger CP than the AC2.
Although the AV method performs better  than the DC and Wald methods, it yields lower than nominal CPs 
in quite many cases when the sample sizes are  unequal.

{\bf RR metric:} For the MH estimator, the DC and ASY methods yield CPs below the nominal level in about half of cases at $RR=1.5$ under unbalanced sample sizes, and work well in other situations.
The AC and ACL methods produce similar results that are above $94.5\%$ in nearly all cases.
The AC2 CI is slightly less conservative than the AC CI.
The AV  method gives CPs that are slightly below $94.5\%$ in a few cases with unbalanced sample sizes. 
The AVL method yields CPs above $97.5\%$  more often than other three MOVER CIs.

\end{example}

\begin{figure}[htbp]
\centering
\subfigure[Exam 3: 2 strata, MH weight, $n=80$]{
\includegraphics[scale=0.64]{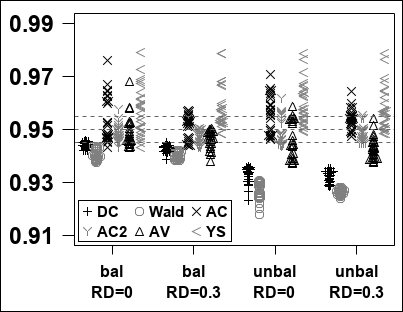}
}
\subfigure[Exam 4:  3 strata, MH weight, $n=96$]{
\includegraphics[scale=0.64]{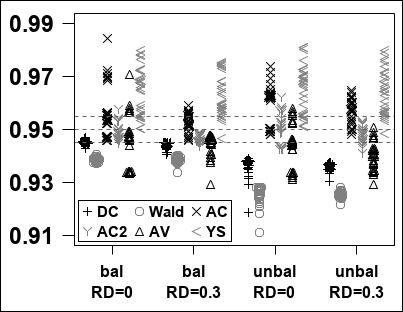}
}
\subfigure[Exam 3: 2 strata, MR weight, $n=80$]{
\includegraphics[scale=0.64]{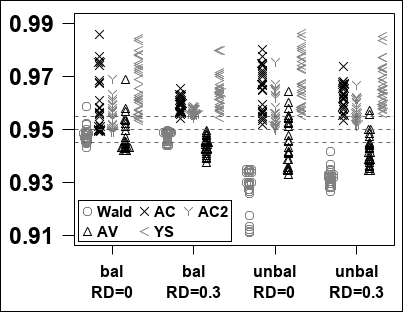}
}
\subfigure[Exam 4:  3 strata, MR weight, $n=96$]{
\includegraphics[scale=0.64]{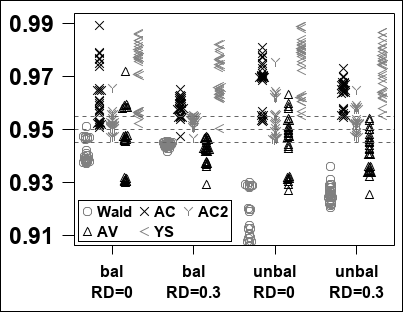}
}
\subfigure[Exam 3: 2 strata, MH estimate, $n=80$]{
\includegraphics[scale=0.64]{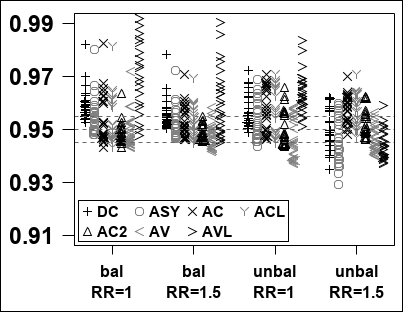}
}
\subfigure[Exam 4: 3 strata, MH estimate, $n=96$]{
\includegraphics[scale=0.64]{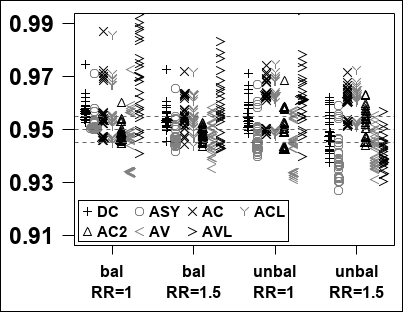}
}
\caption{Coverage probabilities of various CIs for RD and RR when the effect is homogeneous across strata}\label{binplot}
\end{figure}

\begin{example}\label{cpsim2}\normalfont
We conduct a  simulation study to assess CIs for stratified comparisons of binary outcomes for a constant effect on the RD and RR scales
when there are three strata.  The sample sizes are either balanced  $(n_{10},n_{11},n_{20},n_{21},n_{30},n_{31})=(20,20,16,16, 12,12)$ or
 unbalanced $(n_{10},n_{11},n_{20},n_{21},n_{30},n_{31})=(10,30,8,24,6,18)$ in the two groups. The trial size is $n=96$.
The true response rates  in the control arm are set to  $(p_{10},p_{20},p_{30})= (0.12k_1,0.12k_2, 0.12k_3)$ for $k_1,k_2,k_3=1,3,5$.
There are a total of $27$  combinations of $(p_{10},p_{20},p_{30})$.
The results are displayed in Figures \ref{binplot}b, \ref{binplot}d and \ref{binplot}f, and the result pattern is fairly similar to that observed in Example \ref{cpsim}.
\end{example}

\begin{figure}[htbp]
\centering
\subfigure[ $RD_1=0$, $RD_2=0.3$, MH weight]{
\includegraphics[scale=0.64]{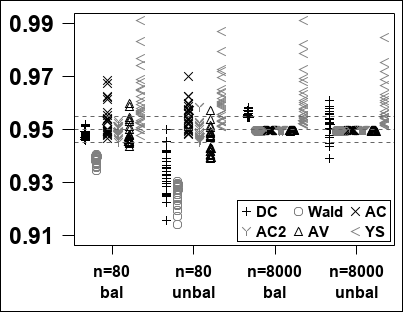}
}
\subfigure[$RD_1=0$, $RD_2=0.3$, MR weight]{
\includegraphics[scale=0.64]{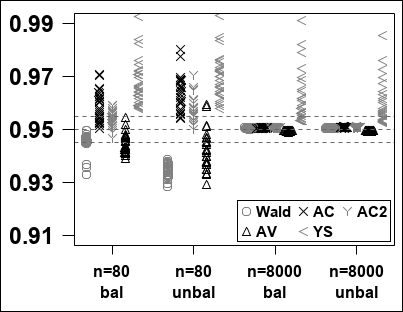}
}
\subfigure[ $RR_1=1$, $RR_2=1.5$, MH weight]{
\includegraphics[scale=0.64]{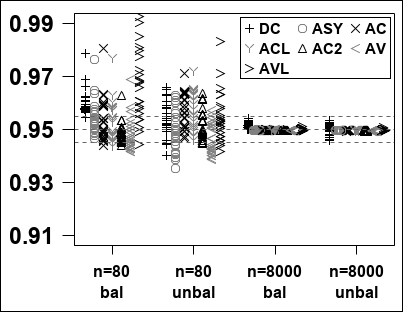}
}
\caption{Coverage probabilities of various CIs for RD and RR under heterogeneity of effects across strata}\label{binplotnon}
\end{figure}

\begin{example}\label{examheter}\normalfont
Simulation is conducted to demonstrate that the MOVER CIs for the MH estimator of RD and RR do not require the assumption of a constant effect. 
For RR, the MH estimate  is the ratio of 
the MH weighted proportions between two groups. When the group size ratio $n_{s1}/n_{s0}$ is constant across strata (commonly used  in large clinical trials),
 the MH estimate is identical to the crude RR estimate from the unstratified analysis, but a stratified analysis improves the
precision of the RR estimate \citep{greenland:1985}.

We use a similar simulation setup to  example \ref{cpsim} except for the following differences.
The true RD is $(0,0.3)$, or the true RR is $(1,1.5)$  in the two strata.
The total  size is either $n=80$ or  $8000$, but the group size ratio remains the same. 
We assess whether each CI contains the pooled difference $w_1^{(mh)}\mbox{RD}_1 +(1-w_1^{(mh)}) \mbox{RD}_2=0.12$ on the RD metric,
or the ratio of the pooled proportion $\frac{w_1^{(mh)}p_{11}+(1-w_1^{(mh)})p_{21}}{w_1^{(mh)}p_{10}+(1-w_1^{(mh)})p_{20}}$ on the RR metric,
where $w_1^{(mh)}=0.6$. Figures  \ref{binplotnon} shows the results. The result for the MR weight on the RD metric is also reported since 
the MR weighting scheme is specially designed to handle non-constant RDs \citep{mehrotra:2000}. 

At $n=8000$, all CIs  except the YS method for RD and the DC methods for both RD \citep{sato:1989} and RR \citep{greenland:1985}
maintain CP in a very narrow range around $95\%$, indicating the asymptotic validity of these methods in the presence of a non-constant RD or RR.
The YS method is generally conservative. The DC methods require the constant effect assumption.
When the RDs differ between the two strata, the MR weight defined in Equation \eqref{mrweightdef} in Example \ref{examyan} below approximates $f_1=w_1^{(mh)}$ when the sample size 
is extremely large  (constant allocation ratio across strata),
and the CPs are hence well maintained around the target level.

At n=80, the ASY or Wald methods tend to undercover. The AC and AC2 methods show better performance than other approaches on the RD scale,
and the AC, AC2 and AVL outperform other methods under unequal randomization on the RR metric. 
The AC2 CI appears to be slightly less conservative than the AC CI on both metrics.

\end{example}

\begin{example}\label{examyan}
\normalfont
We compare the type I error and power for testing the hypothesis $H_0: \text{RD}=0$  on the basis of the Wald  and MOVER CIs, and
 illustrate that the MOVER method may not be suitable for the MR weight in estimating the RD.
 The true proportion  is $(p_{10},p_{20})=(0.1,0.6)$, and RD=0 or 0.05 is constant across strata.

The result is presented in Table \ref{tableyan}. 
The YS \cite{yan:2010} approach yields the two-sided type I error far below the target  $5\%$ level under $H_0: RD=0$, and produces  lower power than other approaches at $RD=0.05$ in all cases.
 For both MH and INV weights, the AC, AC2 and AV approaches have nearly nominal  type I error rates,
evidencing  the asymptotic validity of these MOVER approaches for both fixed and random weights.

The MR weight  slightly inflates the type I error rates  at $n_{11}=n_{10}=n_{21}=n_{20}=10000$ and $RD=0$. 
In this scenario, the empirical  mean  weight $\pm$ standard deviation in stratum $1$ is $0.6490\pm 0.0602$  for the MR weight, 
and   $0.7273\pm  0.0038$ for the INV weight based on $10^6$ simulated datasets. For a constant RD, the MR weight \citep{mehrotra:2000}
\begin{equation}\label{mrweightdef}
 \hat{w}_{mr_1}=\frac{\hat{V}_2+f_1(\hat\delta_2-\hat\delta_1)^2}{\hat{V}_1+\hat{V}_2+(\hat\delta_2-\hat\delta_1)^2}
\end{equation}
 does not actually converge to the limit of the INV weight  $w_{inv_1}=\frac{V_1^{-1}}{V_1^{-1}+V_2^{-1}}=\frac{V_2}{V_1+V_2}$ 
since $\frac{(\hat\delta_2-\hat\delta_1)^2}{\hat{V}_1+\hat{V}_2}$ does not converge in probability to 0, where $\delta_s=p_{s1}-p_{s0}$, 
 $V_s=p_{s1}(1-p_{s1})/n_{s1}+ p_{s0}(1-p_{s0})/n_{s0}$ and
$f_1= (n_{11}+n_{01})/\sum_s (n_{s1}+n_{s0})$. 
The MR weight is not as powerful as the INV weight in detecting a constant RD. 

Lemma \ref{varsigle} is suitable for the INV weight, but not  for the MR weight. The MR weight depends on both  $f_{sg}=n_{sg}/n$ and $n$, but the INV weight depends only on $f_{sg}$'s.
At $n_{11}=n_{10}=n_{21}=n_{20}=10000$, the type I error rate is inflated in the MR-weighted approaches
because the variability in the MR weight is ignored, and the continuity correction of Mehrotra and Railkar \cite{mehrotra:2000} has a negligible effect on the inference in extremely large samples.
\end{example}

\begin{table}[h]\caption{Empirical type I error and power based on $10^6$ simulated datasets for testing  $H_0: \mbox{RD}=0$ at a significance level of $\alpha=0.05$ in a study with $2$ strata  \newline
[1] The sample sizes are the same in each group across strata \newline
[2] The correction method of Mehrotra and Railkar (2000) is applied to all CIs for the MR weight}\label{tableyan}
\centering{
\begin{tabular}{c@{\extracolsep{3pt}}c@{\extracolsep{3pt}}c@{\extracolsep{3pt}}c@{\extracolsep{5pt}}c@{\extracolsep{3pt}}c@{\extracolsep{3pt}}c@{\extracolsep{5pt}}c@{\extracolsep{3pt}}c@{\extracolsep{3pt}}c@{\extracolsep{5pt}}c@{\extracolsep{3pt}}c@{\extracolsep{3pt}}c@{\extracolsep{3pt}}c@{\extracolsep{3pt}}c@{\extracolsep{3pt}}c@{\extracolsep{3pt}}c@{\extracolsep{3pt}}c}\\\hline

  &\multicolumn{5}{c}{MH weight} &     \multicolumn{5}{c}{INV weight } &\multicolumn{5}{c}{MR weight }  \\ \cline{2-6}\cline{7-11} \cline{12-16}
$n_{sg}$ & Wald & YS & AV & AC & AC2  & Wald & YS & AV & AC & AC2 & Wald & YS & AV & AC & AC2  \\ \hline
\multicolumn{16}{c}{two-sided type I error ($\%$): true RD=0} \\
                                                                             50& 5.43& 2.01& 4.79& 4.64& 4.88& 5.53& 1.19& 4.44& 4.96& 5.08& 5.36& 1.42& 5.21& 4.61& 4.83\\
                                                                              100& 5.24& 2.21& 4.91& 4.88& 4.97& 5.25& 1.20& 4.78& 5.16& 5.10& 5.39& 1.49& 5.50& 5.12& 5.13\\
                                                                              200& 5.13& 2.13& 4.98& 4.97& 5.00& 5.15& 1.18& 4.89& 5.09& 5.05& 5.44& 1.60& 5.62& 5.31& 5.32\\
                                                                              500& 5.10& 2.19& 5.04& 5.04& 5.05& 5.20& 1.21& 5.10& 5.17& 5.16& 5.56& 1.63& 5.74& 5.52& 5.52\\
                                                                            10000& 5.03& 2.15& 5.03& 5.03& 5.03& 5.05& 1.18& 5.05& 5.05& 5.05& 5.67& 1.71& 5.74& 5.67& 5.67\\
\multicolumn{16}{c}{power  ($\%$): true RD=0.05} \\
                                                                               50& 14.2&  7.5& 13.4& 13.1& 13.6& 15.8&  6.5& 14.4& 15.1& 15.2& 14.4&  6.4& 14.7& 13.4& 13.7\\
                                                                              100& 23.0& 13.9& 22.5& 22.3& 22.6& 25.9& 12.7& 25.0& 25.5& 25.5& 24.3& 12.6& 24.9& 23.8& 23.9\\
                                                                              200& 40.2& 28.3& 40.0& 39.8& 40.0& 45.4& 27.3& 44.8& 45.2& 45.1& 43.2& 27.1& 44.0& 42.9& 43.0\\
                                                                              500& 77.1& 65.8& 77.0& 77.0& 77.0& 82.7& 67.8& 82.6& 82.7& 82.7& 80.7& 66.7& 81.1& 80.6& 80.6\\
\hline
\end{tabular}}
\end{table}

\section{Discussion}
We propose several MOVER CIs for a difference or ratio parameter under stratified sampling.
These approaches require only the single sample point estimate, variance estimate and CI for the parameter of interest in each stratum, and hence can be easily applied to different outcomes.
For the difference parameter,  either the CI for the weighted difference, or the CI for the weighted effect in each group, or the variance for the weighted difference is 
calculated as the sum of the corresponding stratum-specific statistics. The CIs for the ratio are derived by the Fieller or log-ratio approaches. 
The Fieller CI for the ratio of proportion, rate or mean survival can be uniquely determined and non-disjoint. All these interval estimation methods except the AC2 CI for a ratio are non-iterative.

We apply the MOVER CIs to the binary and survival outcomes. As demonstrated by several simulation studies, the proposed MOVER approaches generally outperform the asymptotic CIs.  
The YS\cite{yan:2010} MOVER CI for RD is  conservative particularly when the proportions vary greatly across strata.
In general, the AC and AC2 approaches show better performance than the AV approach for both difference and ratio parameters. 
For binary outcomes, the AC2 CI tends to be slightly less conservative for the MH estimators of RD and RR than the AC CI.
For the ratio, the Fieller CI is preferable over the log-ratio CI. The log-ratio CI is incomputable if the point estimate $\hat\tau_1$ or $\hat\tau_0$ is 0.
Otherwise, the ACL and AC CIs are generally similar. The AVL approach appears to be more conservative than the AV method. 
In summary, we recommend the AC and AC2 interval estimation methods for the difference, and the Fieller approach for the ratio.

The proposed MOVER CIs are asymptotically valid for random weights  under the assumption of a constant difference or ratio across strata, or for non-constant effects when the weights are fixed.
However, this does not work for the MR-weighted RD for binary outcomes, and the reasons are given in Example \ref{examyan}.
 The continuity correction suggested by Mehrotra and Railkar \cite{mehrotra:2000} for penalizing for ignoring variability in the MR weight
improves the performance in small and moderate samples, but the CP of the CI may slightly deviate from the nominal level when the RD is constant in extremely large samples.

\appendix

\section{Ignoring stratification may lead to biased estimate}\label{biaseff}
We assess the bias of the unstratified RD estimate in data with two strata.
The true proportion is $p_{sg}$ in group $g=0,1$ stratum $s=1,2$. The  true RD between two groups is $\Delta$ in both strata. That is,
$p_{11}-p_{10}= p_{21} - p_{20} =\Delta$. 

Let $n_{sg}$ be the number of subjects in group $g=0,1$ stratum $s=1,2$. Let $r_s= n_{s1}/n_{s0}$.  In the unstratified analysis, the expectation of the estimated RD
is
 $$\frac{ n_{11} p_{11}+n_{21} p_{21} }{ n_{11}+n_{21}} - \frac{ n_{10} p_{10}+n_{20} p_{20} }{ n_{10}+n_{20}} =\Delta +  \frac{ n_{10}r_1 p_{10}+n_{20} r_2p_{20} }{ n_{10}r_1+n_{20}r_2}- \frac{ n_{10} p_{10}+n_{20} p_{20} }{ n_{10}+n_{20}} =\Delta + \frac{n_{10}n_{20}(r_1-r_2)(p_{10}-p_{20})}{ (n_{10}r_1+n_{20}r_2)(n_{10}+n_{20})}.$$
The expected difference in the unstratified analysis is $\Delta$ only if $p_{10} = p_{20}$ (i.e. no stratification effect) or $r_1 = r_2$ (the sample size ratio between two groups is constant in the two strata).
If the risk in the control arm differs  between two strata (i.e. $p_{10}\neq p_{20}$), but  there is imbalance in the stratification factor between two groups (equivalently $r_1\neq r_2$),
 the unstratified analysis is biased.

\section{Proof of two lemmas}
\begin{proof}[Proof of Lemma \ref{varsigle}]
By the delta method, we get Lemma \ref{varsigle}a since
$$\mbox{var}(\hat\tau_g)=\sum_s A_s \mbox{var}(\hat\thetav_s) A_s'=\sum_{s} w_s^2 B_s\mbox{var}(\hat\thetav_s)B_s'=\sum_{s} w_s^2 \mbox{var}(\hat\delta_{sg}),$$ 
where $B_s=\frac{\partial{\delta}_{sg}}{\partial \thetav_s}$ and 
$A_s= [\sum_{s=1}^S \frac{\partial{w_s}}{\partial \thetav_s} (\hat{\delta}_s- \delta_s)+ \hat{w}_s \frac{\partial{\delta_s}}{\partial \thetav_s}]|_{\hat{\thetav}_s=\thetav_s}=  w_s B_s$.
Lemma \ref{varsigle}b holds  obviously for fixed weights, and can be proved similarly by writing $\hat\tau=\hat{w}_s[(\hat\delta_{s1}-\delta_{s1})-\phi(\hat\delta_{s0}-\delta_{s0})]$ for random weights. 
Lemma \ref{varsigle}b implies that $\mbox{cov}(\hat\tau_1^*,\hat\tau_0^*)=0$. 
Note that $c=\frac{\sum_s \hat{w}_s \delta_{s1}}{\tau_1}-\frac{\sum_s \hat{w}_s \delta_{s0}}{\tau_0}=0$ when $\hat{w}_s$ is constant or when $\delta_{s1}=\phi\delta_{s0}$.
Another application of the delta method yields 
$ \mbox{var}(\hat\psi)=\text{var}(\frac{\hat\tau_1}{\tau_1}- \frac{\hat\tau_0}{\tau_0} -c)=\text{var}(\frac{\hat\tau_1^*}{\tau_1}- \frac{\hat\tau_0^*}{\tau_0})= \frac{\text{var}(\hat\tau_1^*)}{\tau_1^{2}} + \frac{ \text{var}(\hat\tau_0^*)}{\tau_0^2}$.

\end{proof}

\begin{proof}[Proof of Lemma \ref{eachci}] It is easy to see the following asymptotic relationships
$$(L_g-\hat\tau_g)^2=[\sum_s\hat{ w}_s (l_{sg}^{(\gamma)}- \hat{\delta}_{sg})]^2= (z_{\gamma/2} \sum_s\hat{ w}_s  \hat\sigma_{sg})^2 =z_{\alpha/2}^2\sum_s w_s^2 \sigma_{sg}^2+o_p(n^{-1}) $$
$$(\log(L_g)-\log(\hat\tau_g))^2= (\frac{L_g-\hat\tau_g}{\hat\tau_g})^2+o_p(n^{-1})= \frac{z_{\alpha/2}^2 \sum_s w_s^2 \sigma_{sg}^2}{\tau_g^{2}}+o_p(n^{-1}).$$
The last equation is obtained by Taylor series expansion of $\log(1+x)$ around $x=\frac{L_g-\hat\tau_g}{\hat\tau_g}$. The remaining relationships can be proved similarly.
\end{proof}

\section{Large sample variance for MH estimates}
Let $w_s=n_{s1}n_{s0}/n_{s}$. The delta variance for the weighted RD $\hat\tau=\sum_s w_s (\hat{p}_{s1}-\hat{p}_{s0})$ is given by Lemma \ref{varsigle}b
$$\text{var}(\hat\tau)= \sum_s w_s^2 [{p}_{s1}(1-{p}_{s1})/n_{s1}+{p}_{s0}(1-{p}_{s0})/n_{s0}].$$
For a constant  RD$=\delta$, replacing ${p}_{s1}(1-{p}_{s1})/n_{s1}$ by $[({p}_{s0}+\delta)(1-{p}_{s1})+ {p}_{s1}(1-{p}_{s0}-\delta)]/(2n_{s1})$
and ${p}_{s0}(1-{p}_{s0})/n_{s0}$ by $[({p}_{s1}-\delta)(1-{p}_{s0})+ ({p}_{s1}-\delta)(1-{p}_{s0})]/(2n_{s0})$ yields the dually consistent variance \citep{sato:1989} with the MH weight.

By Lemma \ref{varsigle}c, the asymptotic variance for the MH estimator of $\log(\widehat{RR}) =\log[\sum_s w_s\hat{p}_{s1}]-\log[\sum_s w_s\hat{p}_{s0}]$ is given by 
$$\text{var}(\log(\widehat{RR})) =\frac{\sum_s w_s^2 {p}_{s1}(1-{p}_{s1})/n_{s1}}{(\sum_s w_s {p}_{s1})^2} + \frac{\sum_s w_s^2 {p}_{s0}(1-{p}_{s0})/n_{s0}}{(\sum_s w_s {p}_{s0})^2}.$$
It  can be reorganized as the dually consistent variance  if RR is constant across strata \citep{greenland:1985}
$$\text{var}(\log(\widehat{RR})) =\frac{\sum_s w_s^2 {p}_{s0}(1-{p}_{s1})/n_{s1}}{(\sum_s w_s{p}_{s1})(\sum_s w_s {p}_{s0})}
 + \frac{\sum_s w_s^2 {p}_{s1}(1-{p}_{s0})/n_{s0}}{(\sum_s w_s {p}_{s0})(\sum_s w_s{p}_{s1})} =  \frac{\sum_s w_s ( \bar{p}_{s}-p_{s0}p_{s1}) }{(\sum_s w_s {p}_{s0})(\sum_s w_s {p}_{s1})} $$
where $\bar{p}_s= \text{E}(\frac{x_{s1}+x_{s0}}{n_s})=\frac{n_{s0}p_{s0}+n_{s1}p_{s1}}{n_s}$. The dual consistent variance estimator is computed by replacing
$p_{sg}$ by  $\hat{p}_{sg}=\frac{x_{sg}}{n_{sg}}$.
\bibliography{moverci} 

\end{document}